\def\F{{\ensuremath{\mathbb{F}}}}
\def\D#1#2{\frac{\mathrm{d}#1}{\mathrm{d}#2}} 
\DeclareMathOperator{\E}{\mathbb{E}} 
\let\sim\thicksim
\newtheorem{theorem}{Theorem}
\newtheorem{lemma}{Lemma}
\newtheorem{remark}{Remark}
\begin{document}
\title{Cumulant Expansion of Mutual Information\\
for Quantifying Leakage of a Protected Secret}

\author{%
	\IEEEauthorblockN{Olivier Rioul\IEEEauthorrefmark{1},
		Wei Cheng\IEEEauthorrefmark{1},
		and Sylvain Guilley\IEEEauthorrefmark{2}\IEEEauthorrefmark{1}}
	\IEEEauthorblockA{\IEEEauthorrefmark{1}%
		LTCI, T\'el\'ecom Paris, Institut Polytechnique de Paris,
		Palaiseau, France,
		firstname.lastname@telecom-paris.fr}
	\IEEEauthorblockA{\IEEEauthorrefmark{2}%
		Secure-IC S.A.S., 
		Tour Montparnasse, Paris, France,
		sylvain.guilley@secure-ic.com}
}

\maketitle

\begin{abstract}
The information leakage of a cryptographic implementation with a given degree of protection is evaluated in a typical situation when the signal-to-noise ratio is small. This is solved by expanding Kullback-Leibler divergence, entropy, and mutual information in terms of moments/cumulants.
\end{abstract}

\def\sigmaN{\sigma_{\!N}}
\def\sigmaZ{\sigma_{\!Z}}
\def\sigmaY{\sigma_{Y}}

\section{Introduction}

Consider the following threat model in any secrecy or privacy problem where the adversary guesses a secret (cryptographic key, password, identifier, etc.), modeled as a discrete random variable $X$, using some observation output of some \emph{side channel} (power consumption, electromagnetic emanation, acoustic noise, timing, etc.) modeled as a real-valued random variable~$Y$.
In side-channel applications targeting cryptographic implementations, the observation is generally made by some noisy measurement of a \emph{sensitive variable} $Z$, an unknown (possibly randomized) function of the secret $X$ which depends on the implementation. The noise is often modeled as Gaussian $N\sim\mathcal{N}(0,\sigmaN^2)$ independent of $(X,Z)$, and the observed $Y=Z+N$ is the output of an AWGN channel.
We are interested in how mutual information 
\begin{equation}
\label{equ:mi}
I(X;Y)=
h(Z+N) - h(Z+N\mid X)
\end{equation}
decreases as noise power $\sigmaN^2$ increases, that is, in a typical small signal-to-noise scenario.
The aim is to provide a theoretical leakage quantification as a dependency metric between secret~$X$ and attacker's observation~$Y$. This is particularly interesting for the designer who needs to evaluate the robustness of a given implementation to side-channel attacks.

In practice, the cipher algorithm is protected by some masking scheme in such a way that leakage is perfectly balanced at all orders $k<K$:
\begin{equation}\label{eqn-balanced_moments_Z_k}
\E(Z^k| X) = \E(Z^k) \quad\text{ a.s.} \quad (k=1,2,\ldots,K-1).
\end{equation}
Expanding powers $Y^k=(Z+N)^k$ and using the fact that $N$ is independent of $X$, it follows by induction that
\begin{equation}\label{eqn-balanced_moments_Y_k}
\E(Y^k| X) = \E(Y^k) \quad\text{ a.s.}  \quad (k=1,2,\ldots,K-1).
\end{equation}
The order $K$ is referred to as the \emph{high-order correlation immunity} (HCI) order by Carlet et al.~\cite{DBLP:journals/jce/CarletDGMP14}. It corresponds to the smallest moment of leakage that may depend on the secret. 
As a result, any attack from observation $Y$ based on correlation analysis of degree $k<K$ necessarily fails; $K$ is the minimal attack order that can succeed. 

The question now becomes: How does mutual information $I(X;Y)$ capture the fact that the $k$th order conditional moment $m_K(Y|X=x)=\E(Y^K|X=x)$ depends on $x$ when the noise increases?
Carlet et al.'s statement~\cite{DBLP:journals/jce/CarletDGMP14} is that $I(X;Y)$ is asymptotically $O(\sigmaN^{-2K})$ as $\sigmaN\to\infty$. This was taken as a fundamental result in the field of side-channel analysis. It was leveraged to illustrate the strength of  leakage squeezing~\cite[Fig.~4]{DBLP:journals/jmc/CarletDGM14},
to compare different countermeasures~\cite{DBLP:conf/eurocrypt/DucFS15,DBLP:conf/eurocrypt/GrossoS18}, and was extended in~\cite{DBLP:journals/tifs/ChengGCMD21} in the case of a code-based masking implementation where countermeasures can reduce mutual information by increasing the dual distance of the code and reducing its kissing number. 

Carlet et al.'s derivation~\cite{DBLP:journals/jce/CarletDGMP14}, however, is based on Cardoso's small cumulant approximation~\cite[Eq.\,(41)]{DBLP:journals/jmlr/Cardoso03} which in fact replaces Kullback-Leibler divergence by its quadratic approximation~\cite[Eq.\,(29)]{DBLP:journals/jmlr/Cardoso03}. As shown in this paper, this results in a problematic expansion of mutual information~\cite[Eq.\,(6)]{DBLP:journals/jce/CarletDGMP14}, which may yield ambiguous results. We make the appropriate corrections and find the asymptotic equivalent of $I(X;Y)$ up to $K=6$. Higher protection orders $K>6$ are rare in practice and involve cross-terms which make the asymptotic equivalent harder to find. Our main result is then the following\footnote{Throughout we use natural logarithms so that informational quantities are expressed in \emph{nats}.}. 
\begin{theorem}\label{thm:mi}
Let $X,Z$ be (discrete or continuous) real-valued random variables satisfying~\eqref{eqn-balanced_moments_Z_k} at orders $k=1,2,\ldots,K-1$ but \emph{not} at order $K$ (i.e., with at least one value $x$ such that $\E(Z^K| X=x)\ne \E(Z^K)$). Then 
if
$3\leq K\leq 6$,  
the following asymptotic equivalence holds  as $\sigmaN\to\infty$:
\begin{equation}\label{eq-final}
I(X;Y)\;\sim\;  \frac{ \mathrm{Var} \bigl(\E(Z^K|X)\bigr)}{2\cdot K! \cdot (\sigmaN^2+\sigmaZ^2)^K}
\end{equation}
where $\sigmaZ^2=\mathrm{Var}(Z)$ denotes variance and $\mathrm{Var} \bigl(\E(Z^K|X)\bigr)$ denotes  inter-class variance.
\end{theorem}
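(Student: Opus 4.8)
The plan is to use the identity $I(X;Y)=\E_X\bigl[D(p_{Y\mid X}\,\|\,p_Y)\bigr]$ and to analyse the divergence between the conditional density $f_x:=p_{Y\mid X=x}$ and the marginal $g:=p_Y$ in the large-noise limit. Writing $\sigmaY^2=\sigmaN^2+\sigmaZ^2=\mathrm{Var}(Y)$ and centering so that $\E(Z)=0$, both $f_x$ and $g$ are smooth perturbations of the \emph{same} Gaussian $\phi_{\sigmaY}=\mathcal{N}(0,\sigmaY^2)$, because the cumulants of $Y$ of order $k\ge3$ are those of $Z$ and stay bounded while $\kappa_2=\sigmaY^2\to\infty$. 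The first step is to recast~\eqref{eqn-balanced_moments_Z_k} in cumulant form: equality of moments up to order $K-1$ forces equality of cumulants up to order $K-1$, so $\kappa_k(Z\mid x)=\kappa_k(Z)$ for $k<K$, while at order $K$ the lower moments coincide and hence
\begin{equation}\label{eq-plan-delta}
\kappa_K(Z\mid x)-\kappa_K(Z)=\E(Z^K\mid X=x)-\E(Z^K)=:\delta_K(x),
\end{equation}
with $\E_X[\delta_K(X)]=0$ and $\E_X[\delta_K(X)^2]=\mathrm{Var}\bigl(\E(Z^K\mid X)\bigr)$. Crucially, $K\ge3$ makes the conditional variance equal to $\sigmaZ^2$ for every $x$, so the conditional and marginal laws of $Y$ share the variance $\sigmaY^2$ and differ only through cumulants of order $\ge K$.

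I would then expand $f_x$ and $g$ in a Gram--Charlier/Edgeworth series about $\phi_{\sigmaY}$ in the reduced variable $u=y/\sigmaY$, using $\phi_{\sigmaY}^{(k)}(y)=(-1)^k\sigmaY^{-k}\mathrm{He}_k(u)\,\phi_{\sigmaY}(y)$. Every coefficient built only from balanced cumulants (single cumulants of order $<K$ and all their products) is identical for $f_x$ and $g$ and therefore cancels in the difference, leaving as leading term
\begin{equation}\label{eq-plan-diff}
f_x(y)-g(y)\;\sim\;\frac{\delta_K(x)}{K!\,\sigmaY^{K}}\,\mathrm{He}_K(u)\,\phi_{\sigmaY}(y)=O(\sigmaY^{-K}).
\end{equation}
Inserting~\eqref{eq-plan-diff} into the expansion $D(f_x\|g)=\tfrac12\!\int (f_x-g)^2/g-\tfrac16\!\int (f_x-g)^3/g^2+\cdots$ and using the Hermite orthogonality relation $\int \mathrm{He}_m(u)\,\mathrm{He}_n(u)\,\phi(u)\d u=n!\,\delta_{mn}$, the leading (chi-square) term gives $\tfrac12\!\int (f_x-g)^2/g\sim \delta_K(x)^2/(2\,K!\,\sigmaY^{2K})$. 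Averaging over $X$ (a finite sum for discrete secrets, otherwise by dominated convergence) and using~\eqref{eq-plan-delta} then reproduces~\eqref{eq-final} exactly.

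The leading computation is short; the real work is the error control, which is where the hypothesis $3\le K\le6$ enters. Three points need to be settled. First, the cubic and higher terms of the divergence are $O(\sigmaY^{-3K})$, hence negligible against $\sigmaY^{-2K}$. Second, the subleading parts of~\eqref{eq-plan-diff} (from $\kappa_{K+1},\kappa_{K+2},\dots$) contribute to the chi-square integral either nothing, by orthogonality, or at order $o(\sigmaY^{-2K})$. Third, and most delicate, one must check that no product of cumulants---the \emph{cross-terms}---produces a further contribution at the critical order $\sigmaY^{-2K}$: such products are either built entirely from balanced cumulants (and cancel in the difference) or are linear in a single differing cumulant $\kappa_r$ with $K\le r<2K$ times balanced factors, and hence vanish after averaging since $\E_X\bigl[\kappa_r(Z\mid X)\bigr]=\kappa_r(Z)$ for $r<2K$. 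For $3\le K\le6$ these cases are few and can be enumerated explicitly, so the order-$\sigmaY^{-2K}$ coefficient is unambiguous; it is precisely this enumeration that proliferates for $K>6$, which is why the statement stops at $K=6$. This careful accounting of cross-terms is also the correction to the naive quadratic (Cardoso) approximation: one must verify that the quadratic term genuinely furnishes the asymptotic equivalent and is neither contaminated by, nor an artifact of, the discarded higher-order pieces.
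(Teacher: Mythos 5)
Your proof is correct and reaches the stated equivalent, but it takes a genuinely different route from the paper's. The paper works with the non-Gaussianity decomposition~\eqref{equ:mi:div}, $I(X;Y)=D(Y\|Y^*\mid X)-D(Y\|Y^*)$ with a \emph{Gaussian} reference $Y^*$: each divergence is expanded in full (Theorem~\ref{thm-div-expand-general} and~\eqref{eq-Rioul}, including the cross-terms $\widetilde{m}_3^2\widetilde{m}_4/\sigma^{10}$, $\widetilde{m}_3^4/\sigma^{12}$, \dots obtained from explicit Hermite triple and quadruple products), and only then is the difference taken; the hypothesis $K\le 6$ is invoked to check that every term of~\eqref{eq-Rioul2} at order $\sigma^{-2K}$ involves cumulants of order $\le K$ only, so that the difference collapses to $\E\kappa_K^2(Z|X)-\kappa_K^2(Z)=\mathrm{Var}\bigl(\E(Z^K|X)\bigr)$. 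You instead write $I(X;Y)=\E_X D(p_{Y\mid X}\|p_Y)$ and expand the conditional law against the \emph{true marginal}: since both are Gram--Charlier perturbations of the same Gaussian whose coefficients agree up to order $K-1$ (your cumulant/moment reduction matches the paper's), the perturbation $(f_x-g)/g$ is already $O(\sigmaY^{-K})$, so all non-quadratic divergence terms are $O(\sigmaY^{-3K})=o(\sigmaY^{-2K})$ and the chi-square term plus Hermite orthogonality delivers the equivalent with essentially no cross-term bookkeeping. This is arguably cleaner for the leading term; what the paper's route buys in exchange is the complete expansion~\eqref{eq-Rioul2} of $I(X;Y)$ down to $\sigma^{-12}$ and the explicit diagnosis of the error in Cardoso's and Carlet et al.'s formulas. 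One caveat worth flagging: in your decomposition the hypothesis $3\le K\le 6$ is not doing the work you attribute to it --- your own accounting shows that the dangerous products either sit at order $\sigmaY^{-3K}$ or are linear in a single unbalanced cumulant of order $<2K$ whose conditional expectation equals its unconditional value, so your argument never actually uses $K\le 6$; the restriction in the paper stems from the structure of the non-Gaussianity expansion (terms such as $\widetilde{m}_8\widetilde{m}_3^2/\sigma^{14}$), not from anything in your chi-square computation. This does not invalidate your proof of the stated theorem, but you should either carry the claim through for general $K$ or identify precisely which remainder estimate fails beyond $K=6$, rather than gesturing at a proliferation of cases.
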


Our strategy to prove Theorem~\ref{thm:mi} is to rewrite mutual information in terms of \emph{non-Gaussianity} terms:
\begin{equation}
\begin{split}
\label{equ:mi:div}
I(X;Y) & 
= h(Y^*|X) - h(Y\mid X) -  \bigl(h(Y^*) - h(Y)\bigr) \\&= D(Y\|Y^*\mid X) - D(Y\|Y^*)
\end{split}
\end{equation}
where $Y^*$ is a Gaussian random variable independent of $X$ (hence $h(Y^*|X)=h(Y^*)$) with the same first and second order moments as~$Y$. 
We then go beyond the quadratic cumulant approximation of Cardoso~\cite[Eq.\,(29)]{DBLP:journals/jmlr/Cardoso03} and investigate how Kullback-Leibler divergences $D(Y\|Y^*)$ and $D(Y\|Y^*|X)$ behave as $\sigma$ increases, using a \emph{Gram-Charlier} expansion~\cite{hald2000early} in terms of a sequence of ``{modified moments}''.
This will fill the gap in proving Carlet et al.'s main result~\cite[Thm.\,1]{DBLP:journals/jce/CarletDGMP14}, while also giving the asymptotic equivalent for $3\leq K\leq 6$.
As we shall see, some annoying cross-terms prevent any straightforward generalization for $K>6$. 

The remainder of the paper is organized as follows. Section~\ref{sec:non-gau} reviews a kind of Gram-Charlier expansion and derives the corresponding non-Gaussianity expansions. Section~\ref{sec:exp:mi} gives the resulting expansions of mutual information and explains why the extension of~\eqref{eq-final} to $K>6$ is problematic. Numerical validation is carried out in Section~\ref{sec:num:simulation} in a practical code-based masking scheme in AES with Hamming weight leakage model. 
Section~\ref{sec:conclude} concludes.

\section{Cumulant Expansion of Non-Gaussianity}
\label{sec:non-gau}

Non-Gaussianity $D(Y\|Y^*)=h(Y^*)-h(Y)$ is a nonnegative quantity which vanishes if and only if $Y$ is Gaussian. 
For notational convenience write $\mu=\mu_Y$ and $\sigma=\sigmaY$.
Because $Y$ and $Y^*$ share the same mean $\mu$ and variance $\sigma$, it is convenient to write their densities in the form $\frac{1}{\sigma}f(\frac{y-\mu}{\sigma})$ and $\frac{1}{\sigma}g(\frac{y-\mu}{\sigma})$, respectively, where $f$ and $g$ are standardized densities (in particular $g=\mathcal{N}(0,1)$).
Since Kullback-Leibler divergence is invariant by invertible transformations, one has
\begin{equation}\label{eq:Dxx}
D(Y\|Y^*) \!=\! D\bigl(\tfrac{Y-\mu}{\sigma}\big\| \tfrac{Y^*-\mu}{\sigma}\bigr) \!=\! D(f\|g) \!=\! \int\!\!f \log \frac{f}{g}.
\end{equation}

\subsection{Density Expansion}
As $\sigmaN$ increases, $\sigma=\sqrt{\sigmaN^2+\sigmaZ^2}\to \infty$ but
high-order \emph{cumulants} $\kappa_3, \kappa_4, \ldots, \kappa_K$ of $Y$ remain bounded. In fact for $k\geq3$, $\kappa_k=\kappa_k(Y)=\kappa_k(Z)+\kappa_k(N)=\kappa_k(Z)$ are kept constant. On the other hand since $Y^*$ is Gaussian, all its high-order cumulants  are zero.
This, as we show in the next Lemma, can be used to show that the Gaussian noise~$N$ dominates in $Y=Z+N$ so that $f$ will approach the Gaussian $g$:
\begin{lemma}[Gram-Charlier Expansion]
\label{lem:gram:exp}
\begin{equation}\label{eq-GC}
\frac{f(x)}{g(x)}=1+\sum_{k=3}^K \frac{\widetilde{m}_k}{k!\sigma^k} H_k(x)+ o\bigl(\frac{1}{\sigma^K}\bigr) 
\end{equation}
where $H_k$ is the $k$-th Hermite polynomial ($H_3(x)=x^3-3x$, $H_4(x)=x^4-6x^2+3$,
$H_5(x)=x^5-10x^3+15x$, etc.) and where the ``modified moments''  $\widetilde{m}_k$ satisfy the recursion 
\begin{equation}\label{eq-recursion}
\widetilde{m}_k= \kappa_k +\sum_{j=3}^{k-3} \binom{k-1}{j} \widetilde{m}_j\kappa_{k-j}.
\end{equation}
\end{lemma}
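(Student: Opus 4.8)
The plan is to pass to the standardized variable $W=(Y-\mu)/\sigma$, whose density is exactly $f$, and to expand $f/g$ on the orthogonal system of Hermite polynomials attached to the Gaussian weight $g$. Because $Y=Z+N$ with $N$ Gaussian and independent of $Z$, the cumulants of $Y$ are $\kappa_1(Y)=\mu$, $\kappa_2(Y)=\sigma^2$, and $\kappa_k(Y)=\kappa_k(Z)$ for $k\geq 3$; the scaling rule for cumulants then gives $W$ the cumulants $0,1,\kappa_3/\sigma^3,\kappa_4/\sigma^4,\ldots$ Since the $\kappa_k(Z)$ are fixed constants while $\sigma\to\infty$, every standardized cumulant of order $\geq3$ tends to $0$, and this is precisely the mechanism that forces $f$ toward $g$ and organizes the expansion in powers of $1/\sigma$.

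First I would write $f/g=\sum_{k\geq0}c_kH_k$ and exploit the orthogonality relation $\int H_mH_n\,g=m!\,\delta_{mn}$ (equivalently Rodrigues' formula $gH_k=(-1)^kg^{(k)}$) to read off $c_k=\tfrac{1}{k!}\E[H_k(W)]$. Standardization gives $c_0=1$ and $c_1=c_2=0$, so the sum genuinely starts at $k=3$. Setting $\widetilde{m}_k:=\sigma^k\,\E[H_k(W)]=k!\,\sigma^k c_k$ matches the coefficients of~\eqref{eq-GC}, and it then remains to identify these $\widetilde{m}_k$ with the solution of~\eqref{eq-recursion}.

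For that identification I would use the Hermite generating function $\sum_k\frac{t^k}{k!}H_k(x)=e^{tx-t^2/2}$. Taking the expectation over $W$ and writing $\E[e^{tW}]=e^{K_W(t)}$ via the cumulant generating function $K_W$, the factor $e^{-t^2/2}$ cancels the quadratic part and leaves $\sum_k\frac{t^k}{k!}\E[H_k(W)]=\exp\bigl(\sum_{j\geq3}\frac{\kappa_j}{j!}(t/\sigma)^j\bigr)$. The substitution $u=t/\sigma$ turns this into $F(u):=\sum_k\frac{u^k}{k!}\widetilde{m}_k=\exp\bigl(\sum_{j\geq3}\frac{\kappa_j}{j!}u^j\bigr)$. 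The recursion~\eqref{eq-recursion} then falls out of the elementary identity $F'=P'F$ with $P(u)=\sum_{j\geq3}\frac{\kappa_j}{j!}u^j$ by matching coefficients of $u^{k-1}$; this is nothing but the classical moment--cumulant recursion applied to the formal cumulant sequence $(0,0,\kappa_3,\kappa_4,\ldots)$, the $j=k$ term supplying the leading $\kappa_k$ while the vanishing of $\widetilde{m}_1,\widetilde{m}_2$ truncates the sum at $j=k-3$.

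The delicate point, and the main obstacle, is the remainder: showing that truncating at order $K$ costs only $o(\sigma^{-K})$. The Gram-Charlier series need not converge, so the tail cannot simply be declared the error; instead I would argue at the level of characteristic functions in Edgeworth fashion. Since each $\widehat{gH_k}(t)$ equals, up to sign, $(it)^ke^{-t^2/2}$, expanding $\phi_W(t)=e^{-t^2/2}\exp\bigl(\sum_{j\geq3}\frac{\kappa_j}{j!}(it/\sigma)^j\bigr)$ to finite order $K$ with an explicit remainder and inverting term by term reproduces~\eqref{eq-GC}. The convolution structure $Y=Z+N$ is what makes this rigorous: it forces the fast decay $|\phi_Y(t)|\leq e^{-\sigmaN^2t^2/2}$, so that, after splitting the inversion integral into a central region where the Taylor remainder is genuinely $O(\sigma^{-(K+1)})$ and a tail region killed by this decay, every discarded contribution is $o(\sigma^{-K})$ for fixed $x$. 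Keeping this estimate uniform enough to survive the subsequent integration against $\log(f/g)$ in~\eqref{eq:Dxx} is where the bookkeeping must be done carefully.
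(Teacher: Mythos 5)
Your proposal is correct and follows essentially the same route as the paper: both factor the characteristic function as $\phi_{Y^*}(t)\,e^{\psi(t)}$ with $\psi$ the truncated high-order cumulant series, obtain the $\widetilde{m}_k$ by exponentiating it (your coefficient-matching in $F'=P'F$ is the same computation as the paper's Leibniz-rule step $(e^\psi)^{(k)}=(\psi'e^\psi)^{(k-1)}$), and invert the Fourier transform via the Rodrigues/Hermite derivative property. The only differences are minor: you add the (correct, and clarifying) orthogonality reading $\widetilde{m}_k=\sigma^k\E H_k\bigl(\tfrac{Y-\mu}{\sigma}\bigr)$, and you control the remainder by the standard Edgeworth central/tail splitting of the inversion integral using $|\phi_Y(t)|\le e^{-\sigma_N^2t^2/2}$, where the paper instead invokes Watson's lemma for the same $o(\sigma^{-K})$ estimate.
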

\noindent The modified moments are computed exactly as the genuine moments $m_k$ are computed from the cumulants $\kappa_k$ using Smith's formula~\cite{smith1995recursive}, except that $\kappa_1$ and $\kappa_2$ are absent. Thus $\widetilde{m}_1=\widetilde{m}_2=0$, $\widetilde{m}_3=\kappa_3$, $\widetilde{m}_4=\kappa_4$, $\widetilde{m}_5=\kappa_5$, $\widetilde{m}_6=\kappa_6+10\kappa_3^2$, $\widetilde{m}_7=\kappa_7+35\kappa_3\kappa_4$, 
etc.
Notice that modified moments, like high-order cumulants, are bounded as $\sigma\to\infty$.
\begin{proof}
By definition of cumulants, the characteristic function $\phi_Y(t)=  \E(e^{itY})$  of $Y$ can be factorized as 
\begin{equation}\label{eq-factorcf}
\phi_Y(t)   
 =  \phi_{Y^*}\!(t) \,e^{\psi(t)}  
\end{equation}
where $\phi_{Y^*}\!(t) = e^{i\mu_Y t-\sigmaY^2 t^2/2}$ is the characteristic function of $Y^*\sim\mathcal{N}(\mu,\sigma^2)$ and  $\psi(t)
= \sum_{k=3}^K \kappa_k \frac{(it)^k}{k!}  +o(t^K)$.
Taking the exponential we expand $\exp\psi(t)= 1+\sum_{k=3}^K \widetilde{m}_k \frac{(it)^k}{k!} +o(t^K)$.
The coefficients $\widetilde{m}_k$ can be found by Taylor's formula and Leibniz's rule: $i^k\widetilde{m}_k=(e^\psi)^{(k)}(0)= (\psi' e^{\psi})^{(k-1)}(0)= \sum_{j}\!\binom{k-1}{j} (e^{\psi})^{(j)}(0)\psi^{(k-j)}\!(0)
$ which simplifies to~\eqref{eq-recursion}.
Now~\eqref{eq-factorcf} becomes
\begin{equation}\label{eq-expandcf}
\phi_Y(t) = \Bigl(1+\sum_{k=3}^K \frac{\widetilde{m}_k}{k!} (it)^k\Bigr)\phi_{Y^*}\!(t) +o(t^K) \phi_{Y^{\!*}}\!(t). 
\end{equation}
Taking the inverse Fourier transform gives the density of $Y$:
$$
\frac{1}{\sigma}f\bigl(\frac{y-\mu}{\sigma}\bigr)= \Bigl(1+\sum_{k=3}^K \frac{\widetilde{m}_k}{k!} (-\D{}{y})^k\Bigr) \frac{1}{\sigma}g\bigl(\frac{y-\mu}{\sigma}\bigr) + R(y)
$$
where we have used that multiplication by $(-it)$ in the Fourier domain (characteristic function) corresponds to differentiation.
Now by the defining property of Hermite polynomials,
\begin{equation*}
(-\D{}{y})^k  g\bigl(\frac{x-\mu}{\sigma}\bigr)  
= \frac{1}{\sigma^{k}} H_k\bigl(\frac{x-\mu}{\sigma}\bigr)\cdot g\bigl(\frac{x-\mu}{\sigma}\bigr).
\end{equation*}
The $o(t^K)$ term in~\eqref{eq-expandcf} having at most polynomial growth at infinity,
we can apply Watson's lemma~\cite[Chap.~2]{miller2006applied} for the remainder term $R(y)$, which gives $R(y)=o(\sigma^{-K})$ (with at most polynomial growth in $y$ at infinity). 
Letting $x=\frac{x-\mu}{\sigma}$ and dividing by $g(x)> 0$ gives the announced expansion.
\end{proof}


\begin{remark}\label{rem-0}
Contrary to what seems to be a popular belief in the literature (see e.g.,~\cite{DBLP:journals/jmlr/Cardoso03}), the coefficients multiplying the Hermite polynomials in the Gram-Charlier expansion~\eqref{eq-GC} are not just cumulants $\kappa_k$, but ``modified moments'' $\widetilde{m}_k$, which differ from cumulants as soon as $k\geq 6$.
\end{remark}

\subsection{Divergence Expansion}

\begin{theorem}\label{thm-div-expand-general}
The expansion of divergence in power of $\frac{1}{\sigma}$ is of the form
\begin{equation}
D(f\|g)=\sum_{k=3}^K \frac{c_k}{2k!\sigma^{2k}}  +o\Bigl(\frac{1}{\sigma^{2K}}\Bigr)
\end{equation}
where $c_k=\widetilde{m}^2_k$ + other terms of the form $\alpha_m \widetilde{m}_{k_1}\widetilde{m}_{k_2}\cdots \widetilde{m}_{k_m}$ where $m\geq 3$ and $k_1+k_2+\cdots+k_m=2k$.
\end{theorem}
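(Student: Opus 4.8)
The plan is to substitute the Gram--Charlier expansion \eqref{eq-GC} directly into the integral representation \eqref{eq:Dxx} and expand the logarithm. Writing $\frac{f}{g} = 1 + u$ where $u(x) = \sum_{k=3}^K \frac{\widetilde{m}_k}{k!\sigma^k} H_k(x) + o(\sigma^{-K})$, I would use $\log(1+u) = u - \tfrac{1}{2}u^2 + \tfrac{1}{3}u^3 - \cdots$ so that
\begin{equation*}
D(f\|g) = \int f \log\frac{f}{g} = \int g\,(1+u)\bigl(u - \tfrac12 u^2 + \cdots\bigr),
\end{equation*}
since $f = g(1+u)$. The crucial observation is that $u$ is a quantity of size $O(\sigma^{-3})$ (its leading term carries $\sigma^{-3}$), so $u^2 = O(\sigma^{-6})$, $u^3 = O(\sigma^{-9})$, and so on; each extra power of $u$ costs at least $\sigma^{-3}$. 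To reach accuracy $o(\sigma^{-2K})$ with $K \le 6$ one must in principle retain terms up to $u^4$ (since $4 \times 3 = 12 = 2K$ at $K=6$), but the key simplifying fact is orthogonality.

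The engine that makes every term computable is the orthogonality of Hermite polynomials against the Gaussian weight: $\int g\,H_j H_k = k!\,\delta_{jk}$ and, more generally, $\int g\,H_{k_1}H_{k_2}\cdots H_{k_m}$ is a combinatorial constant that vanishes unless the degrees can be paired up appropriately. First I would expand $\int g\,(1+u)\log(1+u)$ and regroup by total power of $\sigma^{-1}$. The leading contribution at order $\sigma^{-2k}$ comes from the square term: collecting the diagonal part of $\int g\,u^2$ gives, via orthogonality, exactly $\sum_k \frac{\widetilde{m}_k^2}{(k!)^2\sigma^{2k}}\cdot k! = \sum_k \frac{\widetilde{m}_k^2}{k!\,\sigma^{2k}}$, which after the bookkeeping of the $1+u$ prefactor and the $-\tfrac12$ coefficient produces the announced leading term $\frac{\widetilde{m}_k^2}{2k!\sigma^{2k}}$, i.e.\ $c_k = \widetilde{m}_k^2 + \cdots$. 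The remaining ``other terms'' in $c_k$ arise from two sources: off-diagonal cross products $\widetilde{m}_{k_1}\widetilde{m}_{k_2}$ with $k_1 + k_2 = 2k$ appearing in $\int g\, u^2$, and products of $m \ge 3$ modified moments with $\sum k_i = 2k$ coming from the $\int g\,u^m$ terms for $m \ge 3$; each such product is weighted by the corresponding Gaussian--Hermite integral $\alpha_m$. This is precisely the structure claimed in the statement.

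The main bookkeeping obstacle is verifying that all contributions grouped under a given power $\sigma^{-2k}$ are indeed of the stated multiplicative form and that no lone factor $\widetilde{m}_{k_i}$ with $k_i < 3$ sneaks in --- but this is guaranteed because $\widetilde{m}_1 = \widetilde{m}_2 = 0$, so every surviving index satisfies $k_i \ge 3$, which is why the sum runs from $k=3$ and why $m \le 2k/3$ bounds the number of factors. I would also need to confirm that the $o(\sigma^{-K})$ remainder from \eqref{eq-GC}, once squared and integrated, contributes only $o(\sigma^{-2K})$; this follows from the controlled polynomial growth of the remainder (established via Watson's lemma in Lemma~\ref{lem:gram:exp}) together with the rapid Gaussian decay of $g$, so that all the integrals converge and the error estimate propagates through the finite expansion of the logarithm. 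The genuinely hard part --- and the reason the theorem only records the \emph{form} of $c_k$ rather than closed coefficients --- is that for large $k$ the number of admissible partitions $k_1 + \cdots + k_m = 2k$ with all parts $\ge 3$ proliferates, producing the ``cross-terms'' the introduction warns about; pinning down the explicit $\alpha_m$ is a finite but increasingly intricate combinatorial computation that I would defer to the subsequent evaluation of $c_3,\dots,c_6$ rather than attempt in general here.
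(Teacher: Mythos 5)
Your proposal follows essentially the same route as the paper: write $f/g = 1+h$ with $h$ given by the Gram--Charlier expansion, expand $(1+h)\log(1+h)$, and use Gaussian--Hermite orthogonality to extract the diagonal contribution $\widetilde{m}_k^2/(2k!\sigma^{2k})$ from $\tfrac12\int g h^2$ and the cross-terms from $\int g h^m$, $m\ge 3$. One small slip: the ``off-diagonal cross products'' you attribute to $\int g\,u^2$ vanish identically by the very relation $\int g H_j H_k = k!\,\delta_{jk}$ that you quote (this is exactly what forces all surviving cross-terms to have $m\ge 3$, as the theorem asserts), and the paper makes explicit the parity argument --- $\int g H_{k_1}\cdots H_{k_m}=0$ when $k_1+\cdots+k_m$ is odd --- which is what guarantees that only even powers $\sigma^{-2k}$ appear; your ``paired up appropriately'' remark gestures at this but should be stated.
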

 
\begin{proof}
Using~\eqref{eq-GC} in the form $\frac{f}{g}=1+h$ where $h=\sum_{k=3}^K \frac{\widetilde{m}_k}{k!\sigma^k} H_k(x)+ O({\sigma^{-K}})$, we proceed to expand $D(f\|g) =\int g (1+h) \log(1+h)$ where
$(1+h) \log(1+h) 
=h+\frac{h^2}{2}- \frac{h^3}{6}+\frac{h^4}{12}+\cdots +o(h^K)$.
Substituting gives
\begin{equation}
D(f\|g) \!=\! \int\!\!g h +\frac{1}{2}\!\int\!\! g h^2 - \frac{1}{6}\! \int\!\!g h^3 + \frac{1}{12}\! \int\!\!g h^4 +\cdots+ o\Bigl(\!\int\!\! g h^K\Bigr).
\end{equation}
By the orthogonality property of Hermite polynomials
\begin{equation}\label{eq-orth}
\int g H_k H_l  = k!\; \delta_{kl},
\end{equation}
one has $\int\!g H_k = \int\!g H_k H_0=  0$ ($k>0$) hence  $\int g h=0$. Moreover, by orthogonality,
$
\int g h^2 = \sum_{k=3}^K \bigl(\frac{\widetilde{m}_k}{k!\sigma^k} \bigr)^2 k! 
+o(\sigma^{-2K}) 
= \sum_{k=3}^K \frac{\widetilde{m}^2_k}{k!\sigma^{2k}} 
+o(\sigma^{-2K})
$. Thus the quadratic part $\frac{1}{2}\!\int\!\! g h^2$ accounts for the $\frac{\widetilde{m}^2_k}{2k!\sigma^{2k}}$ terms ($k\geq 3$).

The expansion of all higher-order terms $\int g h^m$ ($m\geq 3$) involve terms of the form $\frac{\widetilde{m}_{k_1}\widetilde{m}_{k_2}\cdots \widetilde{m}_{k_m}}{ \sigma^{k_1+k_2+\cdots+k_m}} \int g H_{k_1} H_{k_2} \cdots H_{k_m}$. Since each Hermite polynomial $H_k$ has the same parity as its degree $k$, all such terms vanish if $k_1+k_2+\cdots+k_m$ is odd. Hence there remains only terms in $\frac{1}{\sigma^{2k}}$ as stated.
\end{proof}

\begin{remark}
The asymptotic 
$
D(f\|g) = \frac{1}{2}\int g h^2 + o\bigl(\frac{1}{2}\int g h^2\bigr) 
$
was already proved in~\cite[Lemma~1]{DBLP:journals/tit/AbbeZ12}.
\end{remark}

\subsection{First Few Terms in the Divergence Expansion}

We can carry out the computations up to $K=6$.
The cubic and quartic terms can be evaluated at first orders using the special values~\cite[\S 6.8]{special}:  
$\int g H_3^2 H_4 = \frac{3! 3! 4!}{1! 2! 2!} = 216$, 
$\int g H_4^3 =  \frac{4!4!4!}{2!2!2!}= 1728$,
$\int g H_3 H_4 H_5 = \frac{5! 4! 3!}{3! 2! 1!} = 1440${\large\strut}, 
$\int g H_3^2 H_6 =\frac{3! 3! 6!}{0! 3! 3!} = 720$, and
$\int g H_3^4 = 3 \frac{3!^4}{0!^4 3!^2} + 6 \frac{3!^4}{0!^2 1!^2 2!^2} + \frac{3!^4}{1!^6}  = 3348$,
plus the fact that all terms in odd powers of $\sigma$ are zero (since 
they involve integrals $\int g H_kH_lH_m=0$ when $k+l+m$ is odd).
After some calculation we obtain
\begin{equation*}
\begin{split}
\int gh^3 
=\frac{648}{\sigma^{10}} \Bigl(\frac{ \widetilde{m}_3}{3!}\Bigr)^2\frac{\widetilde{m}_4}{4!}
+ \frac{1728}{\sigma^{12}} \Bigl(\frac{\widetilde{m}_4}{4!}\Bigr)^3
+ \frac{8640}{\sigma^{12}} \frac{\widetilde{m}_3}{3!}\frac{\widetilde{m}_4}{4!}\frac{\widetilde{m}_5}{5!}
\\+ \frac{2160}{\sigma^{12}} \Bigl(\frac{\widetilde{m}_3}{3!}\Bigr)^{2} \frac{\widetilde{m}_6}{6!}+O\Bigl(\frac{1}{\sigma^{14}}\Bigr)
\end{split}
\end{equation*}
and
\begin{equation*}
\int gh^4 
=\frac{3348}{\sigma^{12}}\Bigl(\frac{ \widetilde{m}_3}{3!}\Bigr)^4+O\Bigl(\frac{1}{\sigma^{14}}\Bigr).
\end{equation*}
Putting all pieces together and expressing modified moments in terms of cumulants, we obtain
\begin{equation}\label{eq-Rioul}
\begin{aligned}
D(f\|g)
&= \frac{ \widetilde{m}^2_3}{12\sigma^6} + \frac{ \widetilde{m}^2_4}{48 \sigma^8}
+ \frac{ \widetilde{m}^2_5}{240 \sigma^{10}}
-\frac{\widetilde{m}^2_3\widetilde{m}_4}{8\sigma^{10}} 
\!+\! \frac{ \widetilde{m}^2_6}{1440 \sigma^{12}}
\\&
- \frac{\widetilde{m}^3_4}{48\sigma^{12}} 
 - \frac{\widetilde{m}_3\widetilde{m}_4\widetilde{m}_5}{12\sigma^{12}} 
- \frac{\widetilde{m}^2_3 \widetilde{m}_6}{72\sigma^{12}} 
\!+\! \frac{31\widetilde{m}_3^4}{144\sigma^{12}} \!+\!O\Bigl(\frac{1}{\sigma^{14}}\Bigr)\\
&= \frac{ \kappa_3^2}{12\sigma^6} + \frac{ \kappa_4^2}{48 \sigma^8}
-\frac{\kappa_3^2\kappa_4}{8\sigma^{10}}+ \frac{ \kappa_5^2}{240\sigma^{10}} 
\\&+ \frac{7\kappa_3^4}{48\sigma^{12}}
- \frac{\kappa_4^3}{48\sigma^{12}} 
- \frac{\kappa_3\kappa_4\kappa_5}{12\sigma^{12}}
+ \frac{ \kappa_{6}^2}{1440\sigma^{12}}
\!+\!O\Bigl(\frac{1}{\sigma^{14}}\Bigr).
\end{aligned}
\end{equation}

\begin{remark}
In order to check the validity of~\eqref{eq-Rioul}, we can recover a known expression in a different model. Instead of having $Y=Z+N$, suppose that $Y=Y_1+Y_2+\cdots+Y_n$ where the $Y_i$'s are i.i.d. with mean $\mu$, variance $\sigma^2$, and high-order cumulants $\kappa_3$, $\kappa_4$, \ldots. 
The previous expansions can be used by replacing $\sigma$ by $\sqrt{n}\sigma$, $\kappa_k$ by $n\kappa_k$, and letting $n\to+\infty$.
The \emph{Gram-Charlier expansion}, re-ordered in powers of $\frac{1}{\sqrt{n}}$, becomes the \emph{Edgeworth expansion}
\begin{equation}
\begin{split}
\frac{f}{g}
%
\!=\!1\!+\!\frac{\kappa_3}{6\sigma^3\sqrt{n}}H_3+ \frac{\kappa_4}{24\sigma^4n}H_4
\!+\!\frac{\kappa_3^2}{72\sigma^6n}H_6
\!+\!\frac{\kappa_5}{120\sigma^5n\sqrt{n}}H_5 
\\\!+\!\frac{\kappa_{4}\kappa_{3} }{144\sigma^7n\sqrt{n}}H_7 
\!+\! \frac{ \kappa_3^3}{1296\sigma^9 n\sqrt{n}}H_9
\!+\! O\bigl(\frac{1}{n^2}\bigr).
\end{split}
\end{equation}
It is easily seen that all $O\bigl(\frac{1}{\sigma^{14}}\bigr)$ terms 
in~\eqref{eq-Rioul}  are then necessarily $O\bigl(\frac{1}{n^3}\bigr)$. Four terms out of the eight in~\eqref{eq-Rioul} are also $O\bigl(\frac{1}{n^3}\bigr)$, and there remains
\begin{equation}
D(f\|g)= \frac{ \kappa_3^2}{12n\sigma^6} + \frac{ \kappa_4^2}{48 n^2\sigma^8}
-\frac{\kappa_3^2\kappa_4}{8n^2\sigma^{10}}
+ \frac{7\kappa_3^4}{48n^{2}\sigma^{12}}
+O\Bigl(\frac{1}{n^3}\Bigr)
\end{equation}
which is exactly the result of Comon~\cite[Thm~14]{DBLP:journals/sigpro/Comon94} for his ``negentropy'' $D(f\|g)=h(g)-h(f)= \frac{1}{2}\log (2\pi e \sigma^2)-h(f)$. 
\end{remark}

\begin{remark}\label{rm-1}
The expansion~\eqref{eq-Rioul} contrasts with Cardoso's small cumulant approximation to the Kullback-Leibler divergence~\cite[Eq.~(41)]{DBLP:journals/jmlr/Cardoso03} which in our setting would read
$$
\frac{ \kappa_3^2}{12\sigma^6} + \frac{ \kappa_4^2}{48 \sigma^8}
+ \frac{ \kappa_5^2}{240\sigma^{10}} 
+ \frac{ \kappa_{6}^2}{1440\sigma^{12}}
+\cdots
$$
The difference with~\eqref{eq-Rioul} is due to two facts: (a) as already noticed in Remark~\ref{rem-0}, the coefficients of the \emph{Gram-Charlier expansion}~\eqref{eq-GC} are not the cumulants $\kappa_k$ for $k\geq 3$, but the modified moments $\widetilde{m}_k$, which differ from cumulants as soon as $k\geq 6$; (b) Cardoso's derivation only takes the quadratic approximation $\frac{1}{2}\int g h^2$ of divergence into account, ignoring higher order terms such as $\int g h^3=O(\frac{1}{\sigma^{10}})$. 

While (a) and (b) have no effect for the first two terms
$D(f\|g)= \frac{ \kappa_3^2}{12\sigma^6} + \frac{ \kappa_4^2}{48 \sigma^8}+O\bigl(\frac{1}{\sigma^{10}}\bigr) 
$,
both result in annoying higher-order cross-terms in the  genuine expression~\eqref{eq-Rioul} which do not appear in~\cite{DBLP:journals/jmlr/Cardoso03}. 
Because of this, derivations based on~\cite[Eq.~(41)]{DBLP:journals/jmlr/Cardoso03}, particularly the main result of~\cite{DBLP:journals/jce/CarletDGMP14}, become questionable as soon as $O\bigl(\frac{1}{\sigma^{10}}\bigr)$ terms are considered. 
\end{remark}

\begin{remark}
Since $D(f\|g)=D(Y\|Y^*)=h(Y^*)-h(Y)= \frac{1}{2}\log (2\pi e \sigma^2) -h(Y)$ we have the following expansion of (differential) entropy:
\vspace*{-3ex}
\begin{equation}
\begin{split}
h(Y)= 
\frac{1}{2}\!\log (2\pi e \sigma^2) 
- \frac{ \kappa_3^2}{12\sigma^6} 
 - \frac{ \kappa_4^2}{48 \sigma^8}
+\frac{\kappa_3^2\kappa_4}{8\sigma^{10}}
- \frac{ \kappa_5^2}{240\sigma^{10}} 
\\
- \frac{7\kappa_3^4}{48\sigma^{12}}
+ \frac{\kappa_4^3}{48\sigma^{12}} 
+\frac{\kappa_3\kappa_4\kappa_5}{12\sigma^{12}}
- \frac{ \kappa_{6}^2}{1440\sigma^{12}}
\!+\!O\Bigl(\frac{1}{\sigma^{14}}\Bigr).
\end{split}
\end{equation}
 
\end{remark}

\section{Cumulant Expansion of Mutual Information}
\label{sec:exp:mi}

\subsection{Mutual Information Expansion}

We now apply the expansion~\eqref{eq-Rioul} to both terms $D(Y\|Y^*)$ and $D(Y\|Y^*\mid X)$ in~\eqref{equ:mi:div}. To simplify the derivation we assume that $(K-1)$th order protection~\eqref{eqn-balanced_moments_Z_k} holds at least for the first two moments (hence $K\geq 3$): $\mu=\mu_Y=\mu_{Y|X=x}$ and $\sigma=\sigma_Y=\sigma_{Y|X=x}$ for all $x$. We can, therefore, apply \eqref{eq-Rioul} for $D(Y\|Y^*)$ and $D(Y\|Y^*\mid X=x)$ for a given secret value $x$, and then take the expectation over~$X$. Letting $\kappa_k(Z)=\kappa_k(Y)=\kappa_k$ and $\kappa_k(Z|X=x)=\kappa_k(Y|X=x)$ ($k\geq 3$) be the high-order cumulants of $Z$ and $Z|X=x$, respectively, we readily obtain
\begin{equation}\label{eq-Rioul2}
\resizebox{1.\hsize}{!}{$
\begin{split}
I(X;Y) =
\frac{\E\kappa_3^2(Z|X)-\kappa_3^2(Z)}{12\sigma^6} 
+ \frac{\E\kappa_4^2(Z|X)-\kappa_4^2(Z)}{48 \sigma^8} \qquad\quad
\\
-\frac{\E\bigl(\kappa_3^2(Z|X)\kappa_4(Z|X)\bigr)-\kappa_3^2(Z)\kappa_4(Z)}{8\sigma^{10}}
+ \frac{ \E\kappa_5^2(Z|X)-\kappa_5^2(Z)}{240\sigma^{10}} 
\\
 + \frac{7\bigl(\E\kappa_3^4(Z|X)-\kappa_3^4(Z)\bigr)}{48\sigma^{12}} 
- \frac{\E\kappa_4^3(Z|X)-\kappa_4^3(Z)}{48\sigma^{12}} 
\\
- \frac{\E\bigl(\kappa_3(Z|X)\kappa_4(Z|X)\kappa_5(Z|X)\bigr)-\kappa_3(Z)\kappa_4(Z)\kappa_5(Z)}{12\sigma^{12}}
\\
+ \frac{ \E\kappa_{6}^2(Z|X)-\kappa_{6}^2(Z)}{1440\sigma^{12}}
+O\Bigl(\frac{1}{\sigma^{14}}\Bigr). \qquad\qquad
\end{split} $}
\end{equation}

\begin{remark}
This contrasts with the high-order expansion of mutual information in~\cite[Eq.~(6)]{DBLP:journals/jce/CarletDGMP14} which reads
\begin{equation*}
\begin{split}
I(X;Y)\!=\!
\frac{\E\bigl(\kappa_3(Z|X)\!-\!\kappa_3(Z)\bigr)^2}{12\sigma^6} 
\!+\! \frac{\E\bigl(\kappa_4(Z|X)\!-\!\kappa_4(Z)\bigr)^2}{48 \sigma^8} 
\\
\!+\! \frac{ \E\bigl(\kappa_5(Z|X)\!-\!\kappa_5(Z)\bigr)^2}{240\sigma^{10}} 
\!+\! \frac{ \E\bigl(\kappa_{6}(Z|X)\!-\!\kappa_{6}(Z)\bigr)^2}{1440\sigma^{12}}
\!+\!O\Bigl(\frac{1}{\sigma^{14}}\Bigr) . 
\end{split}
\end{equation*}
The difference with~\eqref{eq-Rioul2} is due to three facts: (a) and (b) leading to annoying cross-terms in the non-Gaussianity expansion, as explained in Remark~\ref{rm-1}; (c) terms of the form $\E\kappa_k^2(Z|X)-\kappa_k^2(Z)$ can be written as variances
\begin{equation}
\label{equ:exp:2:var}
\E\kappa_k^2(Z|X)-\kappa_k^2(Z)
=
\E \bigl(\kappa_k(Z|X)-\kappa_k(Z)\bigr)^2
\end{equation}
only under the condition that $\kappa_k(Z)=\E\kappa_k(Z|X)$. This condition indeed holds for $k=3,4,5$ under the above assumptions because of the well-known expressions of $\kappa_3$, $\kappa_4$, and $\kappa_5$ in terms of moments $m_1$, $m_2$, $m_3$, $m_4$, $m_5$, where the quantities  $m_1(Z|X=x)=\E(Z|X=x)=\E(Z)=m_1(Z)$ and $m_2(Z|X=x)=\E(Z^2|X=x)=\E(Z^2)=m_2(Z)$ do not depend on $X=x$ and where
$m_k(Z)=\E(Z^k)=\E\,\E(Z^k|X)=\E m_k(Z|X)$. However,  the condition $\kappa_k(Z)=\E\kappa_k(Z|X)$ is no longer satisfied for $k=6$ because of the  $-10m_3^2$ term in the expression of $\kappa_6=  m_{6}-6m_{5}m_{1}-15m_{4}m_{2}+30m_{4} m_{1}^{2}-10 m_{3}^{2}+120m_{3}m_{2}m_{1}-120m_{3} m_{1}^{3}+30 m_{2}^{3} -270 m_{2}^{2} m_{1}^{2}+360m_{2} m_{1}^{4}-120 m_{1}^{6}$.
 
 While (a), (b), and (c) have no effect for the first two terms $I(X;Y)=
\frac{\E(\kappa_3(Z|X)\!-\!\kappa_3(Z))^2}{12\sigma^6} 
+ \frac{\E(\kappa_4(Z|X)\!-\!\kappa_4(Z))^2}{48 \sigma^8}+O\bigl(\frac{1}{\sigma^{10}}\bigr)$,
they result in annoying higher-order cross-terms in the  genuine expression~\eqref{eq-Rioul2} which do not appear in~\cite{DBLP:journals/jce/CarletDGMP14}.
\end{remark}

\begin{proof}[Proof of the main Theorem~\ref{thm:mi}]
The HCI condition~\eqref{eqn-balanced_moments_Z_k} states that $m_k(Z|X)=m_k(Z)$ a.s. for $k<K$.
Now from the well-known formulas expressing cumulants in terms of moments, one has
$\kappa_k(Z|X)=m_k(Z|X)+$ lower-order terms in $m_1(Z|X)=m_1(Z),\ldots,m_{k-1}(Z|X)=m_{k-1}(Z)$.
It follows that $\kappa_k(Z|X)=\kappa_k(Z)$ a.s. for $k<K$ while for $k=K$, we have $\kappa_K(Z|X)-m_K(Z|X)=\kappa_K(Z)-m_K(Z)$.
Thus, $\kappa_K(Z|X) -\kappa_K(Z)=m_K(Z|X) -m_K(Z)$  and in particular 
$\E\kappa_K(Z|X) -\kappa_K(Z)=\E m_K(Z|X) -m_K(Z) = \E\,\E(Z^K|X) - \E(Z^K)=0$. Therefore, we can write
$ \E\kappa^2_K(Z|X)-\kappa^2_K(Z) = \mathrm{Var} \bigl(\kappa_K(Z|X)\bigr)
=E\bigl(\kappa_K(Z|X)-\kappa_K(Z)\bigr)^2=\E\bigl(m_K(Z|X)-m_K(Z)\bigr)^2=\mathrm{Var} \bigl(m_K(Z|X)\bigr)=\mathrm{Var} \bigl(\E(Z^K|X)\bigr)$ which is nonzero since $\E(Z^K|X)$ is not constant a.s. 

By examination of~\eqref{eq-Rioul2} when $K\leq 6$, it is easily seen that
\begin{equation}\label{eqn-IXY_2K}
\begin{aligned}
I(X;Y) &= \frac{\E\kappa^2_K(Z|X)-\kappa^2_K(Z)}{2 \cdot K!\cdot \sigma^{2K}} +o\Bigl(\frac{1}{\sigma^{2K}}\Bigr)\\&=\frac{ \mathrm{Var} \bigl(\E(Z^K|X)\bigr)}{2\cdot K! \cdot \sigma^{2K}}+o\Bigl(\frac{1}{\sigma^{2K}}\Bigr)
\end{aligned}
\end{equation}
where $\sigma^2=\sigma_Y^2=\sigmaN^2+\sigmaZ^2$. 
\end{proof}

\begin{remark}
What makes the proof of Theorem~\ref{thm:mi} work in that in~\eqref{eq-Rioul}, all terms in $\frac{1}{\sigma^{2k}}$ ($k=3,4,5,6$) involve only cumulants of order $\leq k$. 

This property, however, does not generalize to higher orders.
In fact by Theorem~\ref{thm-div-expand-general}, there is at least one additional term in $\frac{1}{\sigma^{14}}$ in the form $\alpha_3 \widetilde{m}_{8}\widetilde{m}^2_{3}$ (since $8+3+3=14$) which will contribute to a  term
$\frac{\alpha_3 \kappa^2_3(Z)(\E\kappa_8(Z|X)-\kappa_8(Z))}{\sigma^{14}}$
in addition of the $\frac{\E\kappa^2_7(Z|X)-\kappa^2_7(Z)}{10080\cdot \sigma^{14}}$ of~\eqref{eqn-IXY_2K}.
Assuming $\kappa_3(Z)\ne 0$, we still have $I(X;Y)=O(\sigma^{-2K})$ for $K=7$  but with a different asymptotic equivalent.

Furthermore, again assuming $\kappa_3(Z)\ne 0$, for $K=8$ the term $\frac{\alpha_3 \kappa^2_3(Z)(\E\kappa_8(Z|X)-\kappa_8(Z))}{\sigma^{14}}$ still contributes to mutual information so that in the case it is no longer true that $I(X;Y)=O(\sigma^{-2K})$.
We still have $I(X;Y)=O(\sigma^{-14})$ instead of $I(X;Y)=O(\sigma^{-16})$.

In general for higher orders, the terms $\alpha_m \widetilde{m}_{k_1}\widetilde{m}_{k_2}\cdots \widetilde{m}_{k_m}$  ($m\geq 3$, $k_i\geq 3$, $k_1+k_2+\cdots+k_m=2k$) of Theorem~\ref{thm-div-expand-general} 
will not contribute to $I(X;Y)$ only when all $k_i$ are necessarily $<K$. Since the maximum possible $k_i$ is $2k-6$ (for $m=3$, the other two $k_i$'s being equal to $3$), we must have at least~\eqref{eqn-balanced_moments_Z_k} satisfied at order $2k-5$ to ensure that $I(X;Y)=O(\sigma^{-2(k+1)})$.
Therefore, for $K\geq 7$, $I(X;Y)=O(\sigma^{-2K})$ requires an HCI at least  $2K-7$.

In practice, such extremely high-order protection ($K=5,6,7,\ldots$) is unthinkable for all implementations. Hence Theorem~\ref{thm:mi} will apply to all cases of interest. In the following section we illustrate this using a code-based masked implementation for $K\leq 4$.
\end{remark}

\section{Numerical Simulations}
\label{sec:num:simulation}

Consider an advanced encryption standard (AES~\cite{website-fips197}) block cipher, which takes in input a plaintext of $16$ bytes, and outputs a ciphertext of the same size.
The attacker is able to monitor inputs and outputs, but does not know the secret key.
In such a cryptographic algorithm, it is practically impossible to deduce the secret from inputs and outputs: all the security relies on the secrecy of the key, in keeping with Kerckhoffs's principle~\cite{kerckhoffs_2} (a.k.a. Shannon's maxim~\cite{shannon-onetimepad}).

\begin{figure}[h!]
\centering
\includegraphics[width=1.0\linewidth]{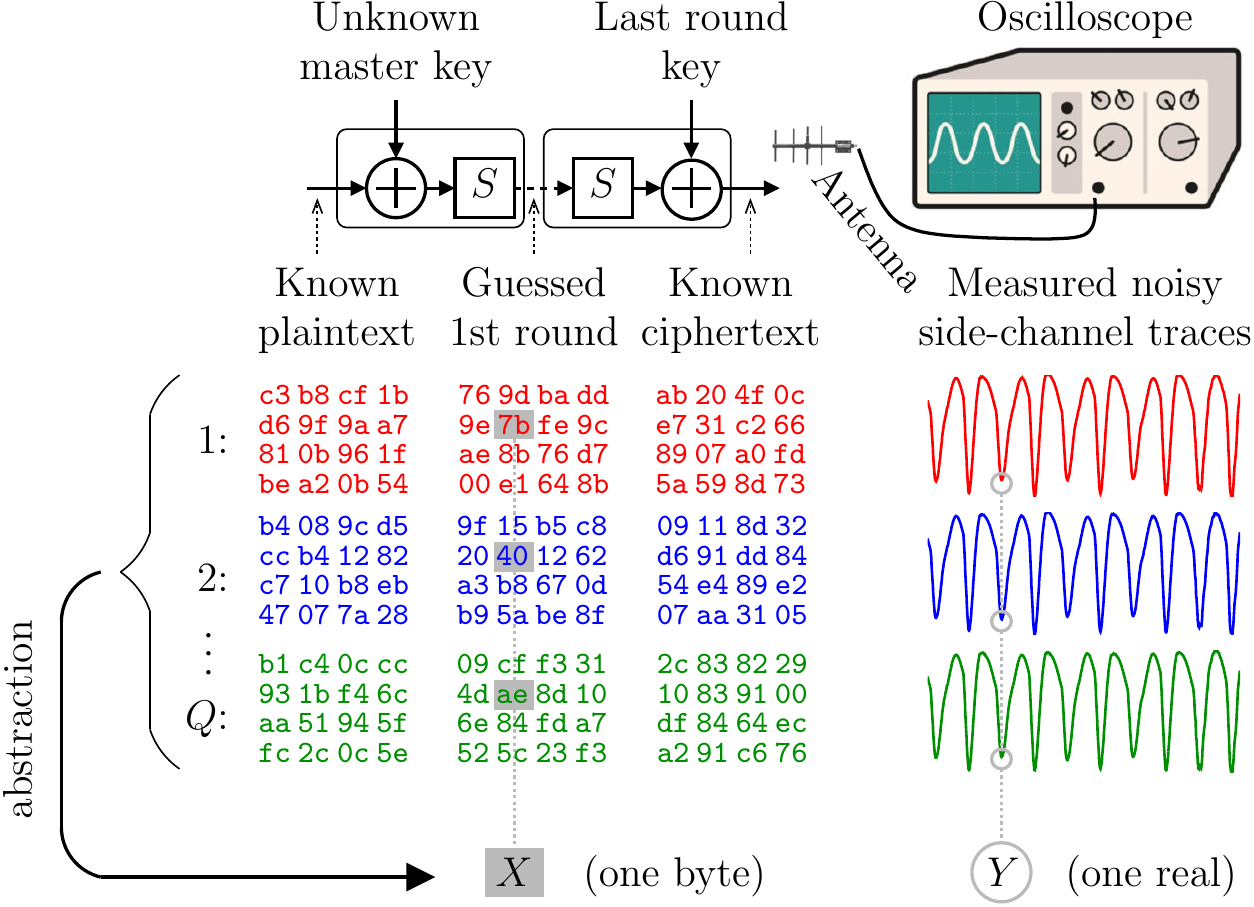}
\vspace{-0.2cm}
\caption{Public information (plaintext and ciphertext) available to an attacker and first-round key-dependent intermediate (discrete) value, put in front of corresponding side-channel execution traces (analog) observed by the attacker.}
\label{fig-sca_example-crop}
\end{figure}

Side-channel attacks consist in measuring power consumption~\cite{kocher-dpa_and_related_attacks} or  electromagnetic (EM) waves~\cite{Gandolfi:2001:EAC:648254.752700} produced during the execution of the AES algorithm. As shown in Fig.~\ref{fig-sca_example-crop}, the attacker measures waveforms corresponding to the side-channel emanation of the AES computation.
Such side information (repeatedly collected many times) is correlated to the secret key, and the attacker tries to exploit it in order to validate assumptions on small chunks of the key. In Fig.~\ref{fig-sca_example-crop},  the reference $128$-bit key is \texttt{0x2b7e151628aed2a6abf7158809cf4f3c} (taken as an example in the NIST specification \cite[Appendix~A]{website-fips197}) and the guessed values are that of the first round of AES, which consists in the application of AES SubBytes on the plaintext XORed with the key.
The measured waveforms are time series of power or EM emanations, which depend on the plaintext (or equivalently, on the ciphertext, since encryption is symmetric). Some specific samples depend on small chunks of the plaintext/ciphertext and of the secret key, and are used by the attacker to assess hypotheses on the key. 

We consider a practical case where the block cipher algorithm is protected by a masking scheme~\cite{DBLP:conf/eurocrypt/ProuffR13}. The order of protection rarely exceeds $K=4$.
Specifically, we target a two-share masking scheme~\cite{DBLP:journals/tifs/ChengGCMD21} in which the key chunk $X\in\F_{q}$ is encoded as $\left(X\oplus (C\otimes M),\, M\right)$, using an independent uniformly distributed mask $M\in\F_{q}$ and a  nonzero constant $C\in\F_{q}$.  Here $\oplus$ and $\otimes$ denote the addition and multiplication, respectively, in a finite field $\F_{q}=\F_{16}$ or $\F_{256}$ 
\footnote{The irreducible polynomial we used in this paper are $\alpha^4 + \alpha + 1$ for $\F_{16}$ and $ \alpha^8 + \alpha^4 + \alpha^3 + \alpha^2 + 1$ for $\F_{256}$.}.

Hamming weight model + Gaussian noise is a commonly used model in side-channel analysis.
The leaked sensitive variable is modeled as a $Z=w_H\left(X\oplus (C\otimes M)\right) + w_H(M)$ where $w_H(\cdot)$ denotes the Hamming weight (number of nonzero bits) and $Y=Z+N$ where $N\sim\mathcal{N}(0,\sigmaN^2)$.
As demonstrated in~\cite{DBLP:journals/tifs/ChengGCMD21} and shown in Table~\ref{tab:C:Ex}, both HCI $K$ and $\mathrm{Var}\bigl(\E(Z^K|X)\bigr)$ change with different choices of $C$.  We can, therefore, validate Theorem~\ref{thm:mi} in multiple cases. 

\begin{table}[h!]
\vspace{-0.2cm}
\renewcommand{\arraystretch}{1.2}
\centering
\caption{Different $K$ and $V_k=\mathrm{Var}\bigl(\E(Z^k|X)\bigr)$ $(k=1,\ldots,K)$ by using different $C$ (in decimal representation).} \label{tab:C:Ex}
\resizebox{0.49\textwidth}{!}{
\begin{tabular}{|c|c|c|c|c||c|c|c|c|c|}
\cline{2-10}
\multicolumn{1}{c|}{} & \multicolumn{4}{|c||}{$X\in\F_{16}$} & \multicolumn{5}{c|}{$X\in\F_{256}$} \\\hline
$C$ & 1&4&8&3&1&128&143&45&29 \\
$K$ & 2&2&2&3&2&2&3&3&4\\\hline
$V_1$ & 0&0&0&0&0&0&0&0&0\\
$V_2$ & 1&0.5&0.25&0&2.0&0.5&0&0&0\\
$V_3$ & - & -& -&0.25&-&-&3.9375&0.5625&0\\
$V_3$ & - & - & - & - & - & - & - & - & 6.75\\\hline
\end{tabular}}
\end{table}

The numerical results of mutual information are shown in Fig.~\ref{fig:mi} in log-log scale, where
slope $-K$ indicate $I(X;Y)\sim \mathrm{Cst}\cdot \sigmaN^{-2K}$.
We observe the first nonzero order expansion of mutual information dominates when the noise level is high enough (e.g., when $\sigmaN^2 \geq 10$). Overall Theorem~\ref{thm:mi} gives an accurate approximation of mutual information.

\begin{figure}[!h]
	\vspace{-0.6cm}
	\centering
		\includegraphics[width=1.03\linewidth]{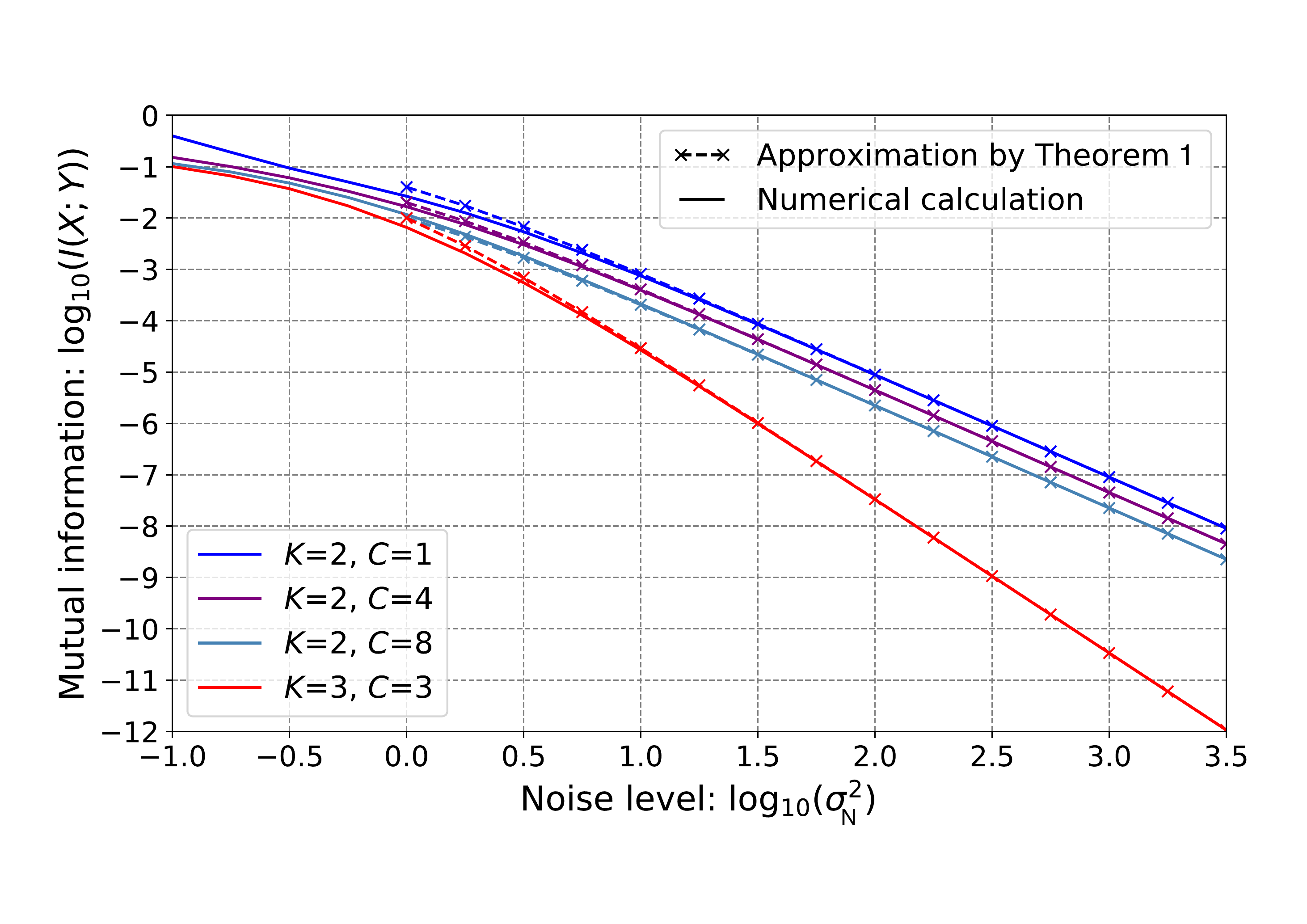}
\vspace*{-1.4cm}\\
		\includegraphics[width=1.03\linewidth]{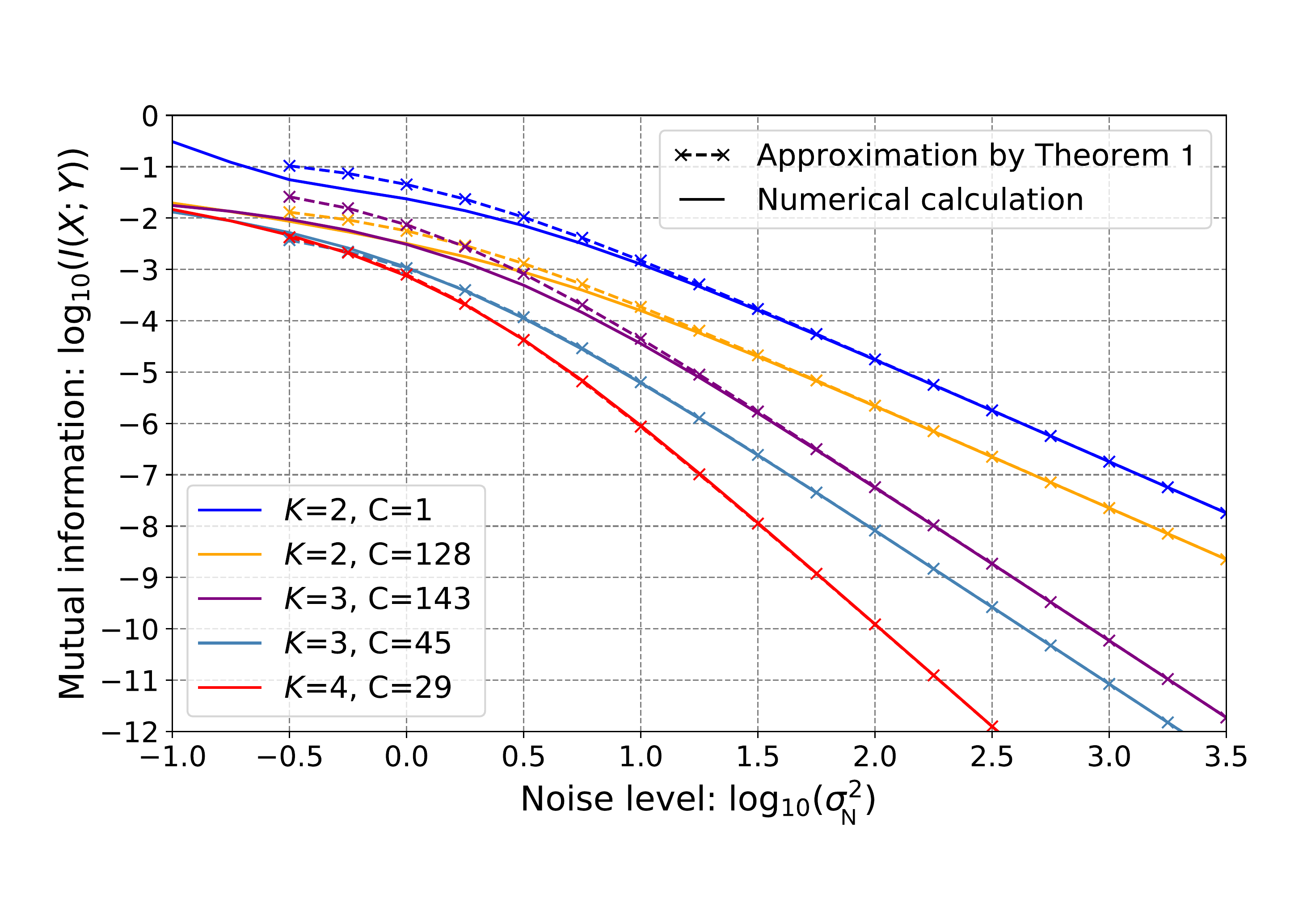}
\vspace*{-1.2cm}
	\caption{Numerical validation of Theorem~\ref{thm:mi} by taking different $C$ therefore different $K$ and $\mathrm{Var}\bigl(\E(Z^K|X)\bigr)$ (a) $X\in\F_{16}$. (b) $X\in\F_{256}$. }
	\label{fig:mi}
	\vspace*{-.4cm}
\end{figure}

\section{Conclusion}
\label{sec:conclude}

In this paper, we presented a cumulant-based expansion of  Kullback-Leibler divergence and mutual information with application to side-channel analysis. We fixed the mathematical issue that existed in the literature and proposed a rigorous proof for the main result in~\cite{DBLP:journals/jce/CarletDGMP14} in most cases of interest.

\providecommand{\noopsort}[1]{}

\end{document}